% Template for ICIP-2022 paper; to be used with:
%          spconf.sty  - ICASSP/ICIP LaTeX style file, and
%          IEEEbib.bst - IEEE bibliography style file.
% --------------------------------------------------------------------------
\documentclass{article}
\usepackage{graphicx} 
\usepackage{xcolor}
\usepackage{spconf,amsmath,graphicx}
\usepackage{amsthm, amssymb}
\usepackage[hyperfootnotes=false]{hyperref}
\usepackage{marvosym}

\newcommand\blfootnote[1]{%
	\begingroup
	\renewcommand\thefootnote{}\footnote{#1}%
	\addtocounter{footnote}{-1}%
	\endgroup
}

\newtheorem{Rema}{Remark}[section]
\newtheorem{TheoPrinc}{Theorem}
\newtheorem{Conjecture}{Conjecture}

\newtheorem{prop}{Proposition}

\newcommand{\e}{\operatorname{e}}

\DeclareMathOperator{\v1}{V1}
% Example definitions.
% --------------------

\def\Z{{\mathbb Z}}    
\def\R{{\mathbb R}}

% Title.
% ------
\title{REPRODUCING SENSORY INDUCED HALLUCINATIONS VIA NEURAL FIELDS} 
%\thanks{The first author was supported by a grant from the ``Fondation CFM pour la Recherche''.\\Email contact: firstname.lastname@l2s.centralesupelec.fr.}
% Single address.
%\address{Author Affiliation(s)}
% ---------------
\name{Cyprien Tamekue\blfootnote{The first author was supported by a grant from the ``Fondation CFM pour la Recherche''.\\Email contact: firstname.lastname@l2s.centralesupelec.fr.}, Dario Prandi, and Yacine Chitour}
\address{\small{Université Paris-Saclay, CNRS, CentraleSupélec, Laboratoire des signaux et systèmes, 91190, Gif-sur-Yvette, France.}}
%
% For example:
% ------------
%\address{School\\
%	Department\\
%	Address}
%
% Two addresses (uncomment and modify for two-address case).
% ----------------------------------------------------------
%\twoauthors
%  {A. Author-one, B. Author-two\sthanks{Thanks to XYZ agency for funding.}}
%	{School A-B\\
%	Department A-B\\
%	Address A-B}
%  {C. Author-three, D. Author-four\sthanks{The fourth author performed the work
%	while at ...}}
%	{School C-D\\
%	Department C-D\\
%	Address C-D}
%
\begin{document}
%\ninept
%
\maketitle
\begin{abstract}
Understanding sensory induced cortical patterns in the primary visual cortex $\v1$ is an important challenge both for physiological motivations and for improving our understanding of human perception and visual organisation.
% physiologically relevant since the latter, to function correctly, must be driven by external stimulus. 
% Moreover, cortical responses to stimulation of the retina by redundant geometric patterns can give more insight into the visual organization. 
% Coherent with the fact that the retino-cortical map gives a bijective  2D representation of an image in the primary visual cortex to an image into the visual field, 
In this work we focus on pattern formation in the visual cortex when the cortical activity is driven by a geometric visual hallucination-like stimulus. In particular, we present a theoretical framework for sensory induced hallucinations which allows one to reproduce novel psychophysical results such as the MacKay effect (Nature, 1957) and the Billock and Tsou experiences (PNAS, 2007).\blfootnote{The first author was supported by a grant from the ``Fondation CFM pour la Recherche''.\\Email contact: firstname.lastname@l2s.centralesupelec.fr.}
\end{abstract}
\begin{keywords}
MacKay effect, visual stimulation, geometric visual hallucinations, neural field equation.
\end{keywords}

\section{Introduction}
\label{sec:intro}

Spontaneous patterns forming in the primary visual cortex, which are the basis for visual hallucinations and illusions, inform us on the underlying mechanisms of human perception, allowing to refine its modelling and consequent implementation in various image processing or computer vision tasks.

In the pioneering works \cite{ermentrout1979mathematical, bressloff2001geometric}, the neural fields (NF) introduced in \cite{wilson1973mathematical} are used to theoretically describe spontaneous pattern formation in absence of external stimulus of geometric patterns in $\v1$ striate cortex ($\v1$ for short). Such patterns are the result of activity spreading over the field and correspond to paroxismic states of intrinsic cortical activity in $\v1$.
Taking into account the retinotipic correspondence between $\v1$ and the visual field \cite{tootell1982deoxyglucose,schwartz1977spatial}, these patterns are yields the geometric visual hallucinations (or form constants) classified by Klüver \cite{kluver1966mescal}, e.g., funnels, tunnels, spirals, checkerboards, cobwebs, etc...

However, these results lack to provide a description of more complex geometric visual hallucinations as those induced by flickering lights \cite{smythies1959stroboscopic}, or regular patterns with redundant informations \cite{mackay1958moving} and with visual noise \cite{mackay1961visual}. Such complex geometric visual hallucinations result from an external (visual) stimulus that disturbs the intrinsic activity in the field before the paroxismic state of cortical activity in $\v1$ occurs. In this work we investigate such hallucinations, by explicitly taking into account the external visual stimulus in the NF equations. 

The focus of this paper lies mainly in the MacKay effect, described in \cite{mackay1958moving,mackay1961visual} (see Figure~\ref{fig:mackay}).
The psycophysical experiences presented in these papers show that, e.g., presenting as a visual stimulus a funnel pattern (fan shape) with highly redundant information at the fovea, will induce the emergence in the visual field of a  ``complementary'' tunnel pattern (concentric circles) in the background, superimposed to the stimulus pattern. It is also reported that a tunnel pattern would produce a superimposed funnel pattern in the background as an after-image.

\begin{figure}
    \centering
    \includegraphics[width=.45\linewidth]{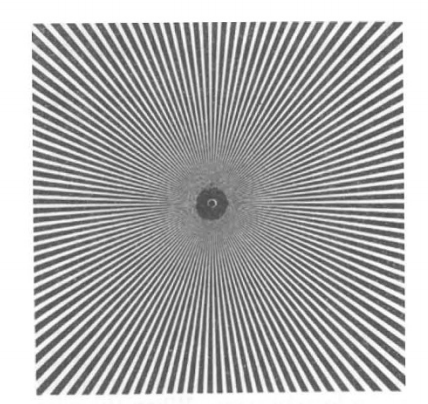}\hspace{1em}
    \includegraphics[width=.45\linewidth]{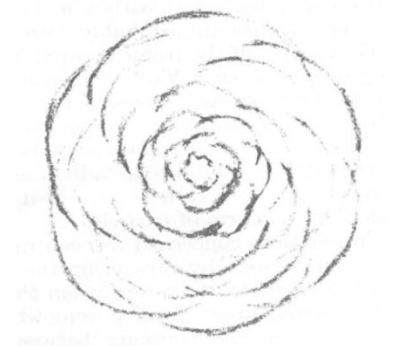}
    \caption{The MacKay effect: the presentation of the stimulus to the left induces the superimposed perception of the complementary image on the right. Reproduced from \cite{mackay1961visual}}
    \label{fig:mackay}
\end{figure}

Mathematically, we interpret Mackay effects as a controllability problem of NF equations describing cortical activity in $\v1$, where the control term is the external (visual) stimulus, which corresponds to the $\v1$ representation of the presented hallucination-like pattern. As a first step in the description of such effects, this work focuses on a simple model of $\v1\simeq \mathbb R^2$ based on a one-layer Amari-type equation \cite{amari1977dynamics} for the evolution of the cortical activity $a: \mathbb R^2\to \mathbb R$: 
\begin{equation}
\label{eq:nf}
{\partial_ta} = -\alpha a +\mu \, \omega\ast f(a)+\operatorname{I}.
\end{equation}
Here, $\alpha,\mu>0$ are parameters, $\ast$ denotes the spatial convolution operation, $\omega:\mathbb R^2 \to \mathbb R$ is an interaction (convolutional) kernel modelling cortical connections in $\v1$, $f$ is a sigmoid non-linearity, and $\operatorname{I}:\mathbb R^2 \to \mathbb R$ is the cortical representation of the presented visual stimulus. 

In our discussion, the choice of the parameters in \eqref{eq:nf} is such that, in absence of an external stimulus $\operatorname{I}$, the cortical state converges exponentially fast to the $0$ equilibrium. From a neurophysiological point of view, this means that we are considering an unaltered state, where no spontaneous geometric hallucination arises.  This prevents us from using bifurcation techniques as those employed in \cite{bressloff2001geometric}, and that allowed the authors of \cite{nicksUnderstanding2021} to obtain MacKay effects near the critical parameters for a variation of \eqref{eq:nf} incorporating a feedback effect in the external stimulus.

In this setting we first present a mathematically sound framework allowing us to define the input-output map $\Psi$, that associates to a time-invariant input stimulus $\operatorname{I}$ the stationary solution toward which \eqref{eq:nf} tends as $t\to +\infty$.
Then, we prove that highly redundant information in the centre of the funnel pattern is necessary for a MacKay effect to manifest. Namely, if the $\v1$ representation of the visual stimulus is a funnel pattern, then the output obtained via \eqref{eq:nf} will have the same shape. Following \cite{bressloff2001geometric} we represent such patterns as contrasting light and dark regions. %depending on whether they are positive or negative. 

Finally, we present numerical simulations obtained when the input patterns have highly redundant information in the sense of MacKay \cite{mackay1958moving}, showing that these reproduce the MacKay observations. We also reproduce visual hallucinations in the spirit of those recorded by Billock and Tsou \cite{billockNeural2007}. We stress that this is the first instance that these non-local phenomena are directly obtained via \eqref{eq:nf} (see \cite{nicksUnderstanding2021} for a different take on the question).

\section{Retino-cortical map}
One of the striking features of the functional architecture of $\v1$ striate cortex is its retinotopic organisation: (i) Neurons are organised in an orderly fashion called topographic or retinotopic mapping, in the sense that they form a 2D representation of the visual image formed on the retina in such a way that neighbouring regions of the image in the visual field are represented bijectively by neighbouring regions in $\v1$ area; (ii) Sufficiently close to the fovea (the centre of the visual field), the image has a much larger representation in V1 than in the visual field \cite{sereno1995borders} so that near the fovea, this map is an enlargement of the identity map; (iii) Far away from the fovea, this map is a log-polar transformation \cite{tootell1982deoxyglucose}.

This map is represented analytically as a complex logarithmic \cite{schwartz1977spatial}. After rescaling, it takes the form
\begin{equation}
   r e^{i\theta}  \mapsto (x_1,x_2):=\left( \log r, \theta \right).
\end{equation}
Here $(r,\theta)$ are retinal polar coordinates and $(x_1,x_2)$ are cortical cartesian cordinates.

A circle of radius $r$ and a ray of constant $\theta$ in the visual field become a vertical line and a horizontal line in $\v1$ respectively, see Figure~\ref{fig:retino}.

\begin{figure}
    \centering
    \includegraphics[width=.7\linewidth]{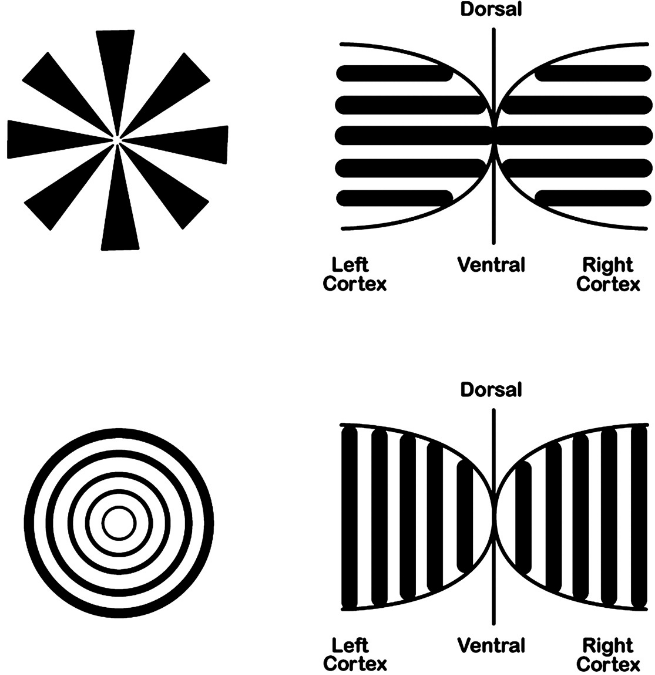}
    \caption{Visual illustration of the retino-cortical map. Reproduced from \cite{billockNeural2007}}
    \label{fig:retino}
\end{figure}

\section{Neural fields for cortical activity in V1}

We assume cortical activity in $\v1$ to be driven by equation \eqref{eq:nf}, where the non-linearity $f:\mathbb R\to \mathbb R$ is assumed to be a bounded $C^2$ odd function normalized such that $f'(0) = \max f'(t) = 1$ and $\|f\|_\infty = 1$. Following  \cite{ermentrout1979mathematical}, we also assume that the interaction kernel $\omega$ is radial (i.e., $\omega(x) = \omega(|x|)$), and that its Fourier transform $\hat \omega$ reaches its maximum at some non-zero critical wavenumber $q_c>0$. We call a vector $\xi_c\in\R^2$ satisfying $|\xi_c|=q_c$ a critical wavevector. This is the case, e.g., for the so-called Mexican hat distribution (or DoG, difference of Gaussians) but not for a simple Gaussian interaction kernel. 

It is shown in \cite{bressloff2001geometric} that under these assumptions there exists a critical value $\mu_c = \alpha / \hat\omega(q_c)$ such that equation \eqref{eq:nf} with $\operatorname{I}\equiv0$ is exponentially stable at $a\equiv 0$ if $\mu<\mu_c$. The \emph{spontaneous cortical patterns} are then obtained as the novel marginally stable equilibria appearing as bifurcations for $\mu=\mu_c$. These marginally stable equilibria are the elements in the kernel of the linear operator
\begin{equation}
    L_{\mu_c}a  = -\alpha a + \mu\,\omega\ast a.
\end{equation}
In particular, this kernel consists of linear combinations of functions of the form $\cos(2\pi\langle\xi^j,x\rangle)$ 
% \begin{equation}\label{eq::SP}
%     P(x) = \sum_{j}^{}p_j\cos(2\pi\langle\xi^j,x\rangle),\qquad p_j\in\R.
% \end{equation}
where, $\xi^j = q_c\e^{i\theta_j}$, $\theta_j\in\R$.

\section{Theoretical results}\label{s::theoretical}
Let $\mu_0:= \alpha / \|\omega\|_1 < \mu_c$.
% 	\begin{equation}\label{eq::driven activity}
% 	\mu_0:=\frac{\alpha}{f'(0)\|\omega\|_1}=\frac{\alpha}{\|\omega\|_1}.
% 	\end{equation}
As we shall see below, given $\operatorname{I}\in L^\infty(\R^2)$, $\mu_0$ is the (natural) largest value of $\mu$ up to which we can provide a stationary output to equation \eqref{eq:nf} in the space $L^\infty(\R^2)$.
From now on, we assume that the excitability rate $\mu$, of $\v1$ satisfies $\mu < \mu_0$. We first have the following.
\begin{prop}
Let $\mu<\mu_0$. Then, there exists a map $\Psi:L^\infty(\mathbb R^2)\to L^\infty(\mathbb R^2)$ such that for any initial datum $a_0\in L^\infty(\mathbb R^2)$ the solution to \eqref{eq:nf} with input $\operatorname{I}\in L^\infty(\mathbb R^2)$ converges exponentially to $\Psi(I)$. Moreover, $\Psi$ is bi-Lipschitz so that
\begin{equation}
    \Psi(\operatorname{I}) = %\frac{1}{\alpha}\operatorname{I}+\frac{\mu}{\alpha}\omega\star f(\Psi(\operatorname{I})).
    \operatorname{I}/\alpha+(\mu/\alpha)\omega\ast f(\Psi(\operatorname{I})).
\end{equation}
\end{prop}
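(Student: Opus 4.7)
The proof proposal splits naturally into three parts: existence/uniqueness of $\Psi(\operatorname{I})$ as a fixed point, the bi-Lipschitz property, and exponential convergence of trajectories.

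\textbf{Step 1 (fixed point).} The stationarity equation rearranges to $a = \operatorname{I}/\alpha+(\mu/\alpha)\,\omega\ast f(a)$, which suggests defining, for each $\operatorname{I}\in L^\infty(\R^2)$, the operator $T_{\operatorname{I}}:L^\infty(\R^2)\to L^\infty(\R^2)$ by $T_{\operatorname{I}}(a)=\operatorname{I}/\alpha+(\mu/\alpha)\,\omega\ast f(a)$. Using Young's inequality for convolutions and the fact that $f$ is $1$-Lipschitz (since $\|f'\|_\infty=f'(0)=1$), one gets
\begin{equation}
\|T_{\operatorname{I}}(a_1)-T_{\operatorname{I}}(a_2)\|_\infty\le (\mu/\alpha)\|\omega\|_1\|a_1-a_2\|_\infty.
\end{equation}
The hypothesis $\mu<\mu_0=\alpha/\|\omega\|_1$ makes the contraction constant $k:=(\mu/\alpha)\|\omega\|_1$ strictly less than $1$, so Banach's fixed point theorem produces a unique $\Psi(\operatorname{I})\in L^\infty(\R^2)$ with $T_{\operatorname{I}}(\Psi(\operatorname{I}))=\Psi(\operatorname{I})$, which is exactly the displayed identity.

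\textbf{Step 2 (bi-Lipschitz).} Subtracting the fixed point identities for $\operatorname{I}_1$ and $\operatorname{I}_2$ and taking $L^\infty$-norms yields, on the one hand, $\|\Psi(\operatorname{I}_1)-\Psi(\operatorname{I}_2)\|_\infty\le \alpha^{-1}\|\operatorname{I}_1-\operatorname{I}_2\|_\infty + k\|\Psi(\operatorname{I}_1)-\Psi(\operatorname{I}_2)\|_\infty$, so that
\begin{equation}
\|\Psi(\operatorname{I}_1)-\Psi(\operatorname{I}_2)\|_\infty\le \frac{1}{\alpha-\mu\|\omega\|_1}\|\operatorname{I}_1-\operatorname{I}_2\|_\infty.
\end{equation}
On the other hand, writing $\operatorname{I}_j=\alpha\Psi(\operatorname{I}_j)-\mu\,\omega\ast f(\Psi(\operatorname{I}_j))$ and using $1$-Lipschitzness of $f$ again gives the reverse estimate $\|\operatorname{I}_1-\operatorname{I}_2\|_\infty\le(\alpha+\mu\|\omega\|_1)\|\Psi(\operatorname{I}_1)-\Psi(\operatorname{I}_2)\|_\infty$, which together with the previous bound proves that $\Psi$ is bi-Lipschitz.

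\textbf{Step 3 (exponential convergence).} Since the nonlinearity $a\mapsto -\alpha a+\mu\,\omega\ast f(a)+\operatorname{I}$ is globally Lipschitz on $L^\infty(\R^2)$, equation \eqref{eq:nf} is well-posed in $L^\infty(\R^2)$ and the mild solution starting at $a_0$ satisfies the Duhamel formula. Setting $b(t):=a(t)-\Psi(\operatorname{I})$ and using the stationarity identity to cancel the affine part, one obtains
\begin{equation}
b(t)=\e^{-\alpha t}(a_0-\Psi(\operatorname{I}))+\mu\int_0^t \e^{-\alpha(t-s)}\,\omega\ast\bigl(f(a(s))-f(\Psi(\operatorname{I}))\bigr)\,ds.
\end{equation}
Taking $L^\infty$-norms, using Young's inequality and $1$-Lipschitzness of $f$, Grönwall's lemma applied to $t\mapsto \e^{\alpha t}\|b(t)\|_\infty$ yields $\|b(t)\|_\infty\le \|a_0-\Psi(\operatorname{I})\|_\infty\,\e^{-(\alpha-\mu\|\omega\|_1)t}$, with rate $\alpha-\mu\|\omega\|_1>0$ by assumption.

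\textbf{Main obstacle.} Steps 1 and 2 are essentially contraction-mapping bookkeeping; the only subtle point is the exponential convergence in Step 3, since $L^\infty$ is not reflexive and the norm is not differentiable. The cleanest way around this is to work with mild solutions and integrate the Duhamel formula directly, rather than attempting to differentiate $\|b(t)\|_\infty$, which reduces the argument to a routine scalar Grönwall inequality.
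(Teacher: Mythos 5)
Your proposal is correct and follows essentially the same route as the paper: a Banach fixed-point argument (with contraction constant $\mu\|\omega\|_1/\alpha<1$) for the stationary solution, and the variation-of-constants (Duhamel) formula combined with Gr\"onwall's lemma for exponential convergence. You merely spell out the bi-Lipschitz estimates that the paper leaves implicit, which is consistent with its sketch.
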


\begin{proof}
Since the r.h.s.~of \eqref{eq:nf} is a Lipschitz map on $L^\infty_t(\mathbb R)\times L^\infty_x(\mathbb R^2)$, it is standard to obtain that any initial datum $a_0\in L^\infty(\mathbb R^2)$ admits a unique solution $a\in C([0,+\infty); L^\infty(\mathbb R^2))$.

The existence of a unique stationary solution (i.e., such that $\dot a = 0$) to \eqref{eq:nf} is obtained via a Banach fixed point argument. The map $\Psi$ is then defined as the map associating the input $I$ with this stationary solution. The exponential convergence to $\Psi(I)$ for any initial datum $a_0$ is obtained by the variation of constants formula and Gronwall's Lemma.
\end{proof}

\begin{Rema}\label{rmk::symmetry group}
Under assumptions on the kernel $\omega$, the map $v\mapsto\omega\ast f(v)$ commutes with  $\mathbf{E}(2)$, the group of rigid motion of the plane, see, \cite[Appendix A]{ermentrout1979mathematical}. It follows that a subgroup $\Gamma\subset\mathbf{E}(2)$ is a symmetry group of $\operatorname{I}\in L^\infty(\R^2)$ if and only if, it is a symmetry group of $\Psi(\operatorname{I})$.
\end{Rema}

We now focus on the \textit{funnel} and \textit{tunnel patterns} appearing in the MacKay effect. Taking into account the retino-cortical map, these patterns are generated by the cortical stimuli
\begin{equation}\label{eq::funnel and tunnel patterns}
	P_T(x) = \cos(\lambda x_1)\quad P_F(x) = \cos(\lambda x_2), \quad \lambda>0.
\end{equation}	
As a consequence of Remark~\ref{rmk::symmetry group}, one deduces that 
$\Psi(P_T)$ (resp. $\Psi(P_F)$) is a function of $x_1$ only (resp. $x_2$ only) even and $\pi/\lambda$ anti-periodic (one uses here the fact that $f$ is odd).

The following notation is needed: if $F$ is a real valued function, we use 
$F^{-1}(\{0\})$ to denote the set of zeroes of $F$.

\begin{TheoPrinc}\label{thm::main}
	Under the assumption $\mu<\mu_0/2$, we have $P_T^{-1}(\{0\}) = \Psi(P_T)^{-1}(\{0\})$ and $P_F^{-1}(\{0\}) = \Psi(P_F)^{-1}(\{0\})$.
\end{TheoPrinc}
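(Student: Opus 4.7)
By the coordinate exchange $x_1\leftrightarrow x_2$, it suffices to treat $P_T$. Write $g(x_1):=\Psi(P_T)(x_1,x_2)$, which by the discussion preceding the theorem is well-defined (independent of $x_2$), even in $x_1$, and $\pi/\lambda$-antiperiodic. The inclusion $P_T^{-1}(\{0\})\subseteq\Psi(P_T)^{-1}(\{0\})$ is then immediate: combining evenness with antiperiodicity gives $g(\pi/(2\lambda))=-g(\pi/(2\lambda))$, so $g$ vanishes at every point $\pi/(2\lambda)+k\pi/\lambda$. This step uses only $\mu<\mu_0$.

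For the reverse inclusion, I would first reduce to one dimension by integrating out $x_2$ in the fixed-point identity of the Proposition, which yields
\begin{equation}\label{eq::1dEqn}
 \alpha g(x_1) = \cos(\lambda x_1) + \mu (K*f(g))(x_1),\qquad K(z):=\int_\R\omega(z,y)\,dy,
\end{equation}
together with the bound $\|K\|_{L^1(\R)}\leq\|\omega\|_{L^1(\R^2)}=\alpha/\mu_0$. Using $|f(t)|\leq|t|$ (which follows from $f(0)=0$ and $\|f'\|_\infty=1$) and taking the sup-norm in \eqref{eq::1dEqn} gives $\|g\|_\infty\leq 1/(\alpha(1-\mu/\mu_0))$; under $\mu<\mu_0/2$ this becomes $\|g\|_\infty<2/\alpha$, and feeding the result back into~\eqref{eq::1dEqn} produces the strict uniform estimate $\|\mu(K*f(g))\|_\infty<2\mu/\mu_0<1=\|\cos(\lambda\,\cdot)\|_\infty$.

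The main obstacle, and heart of the argument, is to upgrade this uniform bound into a \emph{pointwise} strict inequality $|\mu(K*f(g))(x_1)|<|\cos(\lambda x_1)|$ valid whenever $\cos(\lambda x_1)\neq 0$: once this is achieved, \eqref{eq::1dEqn} forbids any zero of $g$ outside $P_T^{-1}(\{0\})$, concluding the proof. To build such a weighted estimate I would exploit that $K*f(g)$ inherits from $f(g)$ the odd symmetry about each point $\pi/(2\lambda)+k\pi/\lambda$ (using the evenness of the radial kernel together with the oddness of $f$), and hence vanishes there to first order. This makes the ratio $(K*f(g))(x_1)/\cos(\lambda x_1)$ bounded near the common zeros, and combining that local control with the global estimate $\|\mu(K*f(g))\|_\infty<2\mu/\mu_0$ away from them --- the threshold $\mu_0/2$ being precisely what forces the product to fit strictly under $|\cos(\lambda x_1)|$ uniformly --- should deliver the required pointwise comparison and complete the proof.
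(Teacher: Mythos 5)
Your first inclusion and the reduction to a one-dimensional fixed-point equation are sound (and, as you note, the first inclusion needs only $\mu<\mu_0$), but the step you yourself call the heart of the argument is a genuine gap: the pointwise inequality $|\mu(K*f(g))(x_1)|<|\cos(\lambda x_1)|$ off the zero set of the cosine is never established, and the route you sketch does not yield it under $\mu<\mu_0/2$. Oddness of $K*f(g)$ about each point $\pi/(2\lambda)+k\pi/\lambda$ only makes the ratio $(K*f(g))/\cos(\lambda\,\cdot)$ \emph{bounded} near the common zeros (and even that requires a Lipschitz-type estimate, hence regularity you must first extract from the equation, since $\omega$ is only integrable); what you need is that $\mu$ times this ratio is strictly below $1$, and boundedness gives no such thing. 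If one quantifies your sketch in the natural way, the equation gives $\mathrm{Lip}(g)\le \lambda/(\alpha(1-\theta))$ with $\theta=\mu/\mu_0$, hence $\mathrm{Lip}(\mu K*f(g))\le \lambda\theta/(1-\theta)$ and the sup bound $\theta/(1-\theta)$; comparing with $|\cos(\lambda x)|=|\sin(\lambda(x-x_0))|$ then requires $\theta/(1-\theta)<\inf_{0<u\le\pi/2}\sin u/\min(1,u)=\sin 1\approx 0.84$, i.e.\ roughly $\mu<0.46\,\mu_0$, not all of $\mu<\mu_0/2$. So the assertion that the threshold $\mu_0/2$ is ``precisely what forces'' the comparison is unsubstantiated; indeed it is not even clear that your intermediate pointwise inequality is true in the whole range where the theorem is stated, since it is strictly stronger than the equality of zero sets.

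For contrast, the paper never attempts a pointwise comparison. Given a zero $x_*$ of $A_T=\Psi(P_T)$, it writes the integral equation satisfied by $y\mapsto A_T(x_*-y)$, whose kernel $k(x,y)=\omega(x-y)-\cos(\lambda x_1)\omega(y)$ satisfies $\sup_x\int_{\R^2}|k(x,y)|\,dy\le 2\|\omega\|_1$; the hypothesis $\mu<\mu_0/2$ makes this equation contractive, and uniqueness of its solution, combined with the oddness of $f$ and the symmetry of $\omega$, forces $A_T(x_*-\cdot)$ to be odd. Consequently the convolution term vanishes at $x_*$ and $\cos(\lambda x_1^*)=0$. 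That is where $\mu_0/2$ enters --- as a contraction threshold for the modified kernel, not through any slope comparison --- and it is the idea your proposal is missing.
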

\begin{proof}
We present an argument only for the first equality since the other one is obtained similarly by using the symmetries of Equation \eqref{eq:nf} and the fact that $\omega$ is radial.

Let us set $A_T = \Psi(P_T)$, for ease of notations. Then,
\begin{equation}
\label{eq:AT}
    A_T = P_T/\alpha + (\mu/\alpha) \omega\ast f(A_T).
\end{equation}
In particular
$A_T^{-1}(\{0\}) \supset \displaystyle\left\{\pm\pi/2\lambda 
+2k\pi \mid k\in\Z\right\}$.	

To show the converse inclusion, let $x_*:=(x_1^*,x_2^*)$ verifying $A_T(x_*) = 0$. From \eqref{eq:AT}, it follows 
	\begin{equation}\label{eq::useful}
	\cos(\lambda x_1^*) = -\mu\int_{\R^2}\omega(y)f(A_T(x_*-y))dy.
	\end{equation}
	On the one hand, by exploiting additions formulae for the cosine, one has for $\operatorname{a.e.}, y\in\R^2$,
	\begin{multline}\label{eq:x-y}
	A_T(x_*-y) =\sin(\lambda x_1^*)\sin(\lambda y_1)/\alpha\\
	+(\mu/\alpha)\int_{\R^2}k(y,z)f(A_T(x_*-z))dz,
	\end{multline}
	where $k(x,y) := \omega(x-y)-\cos(\lambda x_1)\omega(y),$ satisfies
\begin{equation}
K:=\sup\limits_{x\in\R^2}\int_{\R^2}|k(x,y)|dy= 2\|\omega\|_1.
\end{equation}
Since $\mu<\mu_0/2$, a Banach fixed point argument shows that for every $\operatorname{I}\in L^\infty(\R^2)$ there exists an unique solution $b\in L^\infty(\R^2)$ to 
\begin{equation}\label{eq::fixed point 1}
b(x) = \operatorname{I}(x)/\alpha+(\mu/\alpha)\int_{\R^2}k(x,y)f(b(y))dy.
\end{equation}
By \eqref{eq:x-y}, function $b(y):=A_T(x_*-y)$ is the unique solution  of the above equation associated to $I(y)=\sin(\lambda x_1^*)\sin(\lambda y_1)$. 

On the other hand, since $\omega$ is symmetric and the sigmoid $f$ is an odd function, we have also for $\operatorname{a.e.}, y\in\R^2$,
\begin{multline*}
	-A_T(x_*+y) =
	\sin(\lambda x_1^*)\sin(\lambda y_1)/\alpha\\
	+(\mu/\alpha)\int_{\R^2}k(y,z)f(-A_T(x_*+z))dz,
\end{multline*}
	so that, the function $\tilde b(y) = -b(-y)$ is also solution of equation \eqref{eq::fixed point 1} associated to $I(y)=\sin(\lambda x_1^*)\sin(\lambda y_1)$. By unicity of solution, one then has $b(-y) = -b(y)$ for $\operatorname{a.e.}, y\in\R^2$. This shows that $y\mapsto\omega(y)f(A_T(x_*-y))$ is an odd function on $\R^2$, since $\omega$ is symmetric and $f$ is an odd function, which implies that the r.h.s. of \eqref{eq::useful} is equal to $0$ and thus that $x^*\in P_T^{-1}(\{0\})$.
\end{proof}
% \begin{Rema}
% The inclusion $P_T^{-1}(0)\subset A_T^{-1}(0)$ is valid under  Assumption \eqref{eq::ass1}, viz. $\mu<\mu_0$. 
% \end{Rema}
The above theorem shows that, for our model of cortical activity in $\v1$, one cannot obtain a MacKay effect without perturbing the external input when chosen equal to $P_F$ or $P_T$. This does not contradicts the results of \cite{nicksUnderstanding2021}, where effects of MacKay type are derived directly with an input equal to $P_F$ or $P_T$ but for a different model of cortical activity in $\v1$. Indeed, the actual input injected in their model is not equal $P_F$ or $P_T$ but equal to $aP_F$ and $aP_T$, $a$ being the cortical activity .

We observe that $P_T^{-1}(\{0\})\subset \Psi(P_T)^{-1}(\{0\})$ is valid for $\mu<\mu_0$. 
Indeed, numerical simulations 
% (see, Figures~\ref{fig:ss-tunnel}) 
suggest that Theorem~\ref{thm::main} is valid even for $\mu_0/2\le\mu<\mu_0$.
The above remark motivates the following.
\begin{Conjecture}
	Let $P_T$ be the tunnel pattern defined in \eqref{eq::funnel and tunnel patterns}. Assume that the kernel $\omega$ is symmetric and the activation function $f$ is an odd function. Then, under Assumption $\mu<\mu_0$, $P_T$ and $\Psi(P_T)$ have the same set of zeroes.
\end{Conjecture}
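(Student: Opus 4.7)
The plan is to mimic the proof of Theorem~\ref{thm::main} while replacing its core uniqueness step---which requires the strict bound $\mu<\mu_0/2$ coming from the contraction constant $K=2\|\omega\|_1$ of the auxiliary equation~\eqref{eq::fixed point 1}---by a uniqueness statement on a well-chosen subspace where the sharper constant $\|\omega\|_1$ is available. As in the proof of Theorem~\ref{thm::main}, set $A_T=\Psi(P_T)$, fix a zero $x_*=(x_1^*,x_2^*)$ of $A_T$, and let $\psi(y):=A_T(x_*-y)$. The goal is to show that $\psi$ is odd in $y_1$, for then the integrand in~\eqref{eq::useful} is odd in $y_1$ and vanishes upon integration, forcing $\cos(\lambda x_1^*)=0$.

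The key observation is the following. Let $X_{\mathrm{odd}}\subset L^\infty(\R^2)$ be the closed subspace of functions odd in $y_1$. Since $f$ is odd and $\omega$ is radial (hence even in $y_1$), for any $b\in X_{\mathrm{odd}}$ the product $z\mapsto\omega(z)f(b(z))$ is odd in $z_1$, so
\[\int_{\R^2}\omega(z)f(b(z))\,dz=0,\]
and the $\cos(\lambda y_1)\omega(z)$ contribution to $k(y,z)$ cancels identically. Consequently, the operator $b\mapsto\int k(\cdot,z)f(b(z))\,dz$ preserves $X_{\mathrm{odd}}$, and on this subspace it reduces to $b\mapsto\omega\ast f(b)$, whose Lipschitz constant is $\|\omega\|_1$. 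Since the input $I(y)=\sin(\lambda x_1^*)\sin(\lambda y_1)$ itself lies in $X_{\mathrm{odd}}$, a Banach fixed point argument yields, for every $\mu<\mu_0$, a unique solution $b^\star\in X_{\mathrm{odd}}$ of~\eqref{eq::fixed point 1}, obtained as the Picard limit starting from $I/\alpha$. Establishing $\psi=b^\star$ would then conclude the proof.

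The main obstacle is that, for $\mu\in[\mu_0/2,\mu_0)$, equation~\eqref{eq::fixed point 1} may well admit $L^\infty$ fixed points outside $X_{\mathrm{odd}}$, so $\psi=b^\star$ does not follow from contraction on the ambient space alone. I would attack this via a continuation argument in the parameter $\mu$: define
\[
\mu^\star=\sup\bigl\{\mu\in[0,\mu_0) : \Psi_\mu(P_T)^{-1}(\{0\})=P_T^{-1}(\{0\})\bigr\},
\]
where $\Psi_\mu$ emphasizes the dependence on $\mu$. Theorem~\ref{thm::main} gives $\mu^\star\ge\mu_0/2$, and the goal is $\mu^\star=\mu_0$. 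If instead $\mu^\star<\mu_0$, then by the (Lipschitz) continuity of $\mu\mapsto A_T^\mu$ in $L^\infty$---a direct consequence of Proposition~1 via a standard perturbation estimate---$A_T^{\mu^\star}$ must acquire a new zero tangentially, that is, a point $x_\star$ with $A_T^{\mu^\star}(x_\star)=0$ and $\partial_{x_1}A_T^{\mu^\star}(x_\star)=0$. One would then seek a contradiction by linearizing~\eqref{eq:nf} around $A_T^{\mu^\star}$ at $x_\star$ and exploiting the symmetries of $A_T^{\mu^\star}$ (even in $x_1$, constant in $x_2$, $\pi/\lambda$-antiperiodic, by Remark~\ref{rmk::symmetry group}) via a Fredholm-type analysis of the linearized operator. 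An alternative to this bifurcation analysis is to use the $\pi/\lambda$-antiperiodicity of $\psi$ to recast~\eqref{eq::fixed point 1} on the torus $\R/(2\pi/\lambda)\Z$ and pass to Fourier series, aiming to derive algebraic identities that force $\psi\in X_{\mathrm{odd}}$; this route bypasses the bifurcation analysis but requires a delicate treatment of the Fourier coupling induced by the nonlinearity $f$. Either way, this final step is where I expect the main technical difficulty to lie.
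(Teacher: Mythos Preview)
The statement you are attempting to prove is labelled \emph{Conjecture} in the paper, and the paper provides no proof: it offers only numerical evidence and the observation that the inclusion $P_T^{-1}(\{0\})\subset\Psi(P_T)^{-1}(\{0\})$ persists for $\mu<\mu_0$. There is therefore no paper proof to compare your argument against; what matters is whether your proposal actually closes the gap between Theorem~\ref{thm::main} and the conjecture.

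Your restriction to the subspace $X_{\mathrm{odd}}$ is sound and the computation that the auxiliary operator collapses to $b\mapsto\omega\ast f(b)$ there, with Lipschitz constant $\|\omega\|_1$, is correct. But you have also correctly identified the obstruction: the function $\psi(y)=A_T(x_*-y)$ is only known to satisfy~\eqref{eq::fixed point 1} in the full space $L^\infty(\R^2)$, and for $\mu\in[\mu_0/2,\mu_0)$ you have no uniqueness there. Showing $\psi\in X_{\mathrm{odd}}$ is precisely the statement $A_T(x_1^*-y_1)=-A_T(x_1^*+y_1)$ for all $y_1$, which---given only that $A_T$ is even and $\pi/\lambda$-antiperiodic---is equivalent to $x_1^*\in P_T^{-1}(\{0\})$. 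So invoking the unique fixed point $b^\star\in X_{\mathrm{odd}}$ is circular unless you supply an independent reason why $\psi=b^\star$.

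Neither of the two routes you sketch supplies that reason. In the continuation argument, the assertion that a new zero must appear ``tangentially'' with $\partial_{x_1}A_T^{\mu^\star}(x_\star)=0$ is plausible but needs a careful argument (you must rule out, e.g., a zero migrating in from infinity or the possibility that the sup is not attained), and the subsequent ``Fredholm-type analysis'' of the linearized operator is not carried out at all---this is where essentially all of the work would lie. The Fourier route faces the same difficulty you flag: the nonlinearity $f$ couples all modes, and controlling that coupling without a smallness assumption stronger than $\mu<\mu_0$ is exactly the open problem. As written, then, your proposal is an honest outline that isolates the correct obstacle but does not overcome it; the conjecture remains unproved by this argument.
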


\section{Numerical results}

According to the results of Section~\ref{s::theoretical}, for MacKay effects to occur it is necessary to break the $\mathbf{E}(2)$-symmetry of the stimulus pattern, say, $P_F$, via highly redundant information at the fovea. In our implementation we add to the stimulus pattern a localized perturbation $\varepsilon v$, 
where $\varepsilon>0$ and $v$ is a localized function, viz. $v(x) := \chi_{\Omega}(x)$, $\Omega\subset\R^2$.

The numerical implementation is performed in Julia \cite{bezanson2017julia}, and is available at \url{www.github.com/dprn/MacKay}. 
The operator $\Psi$ is numerically implemented via an iterative fixed-point method. The cortical data is defined on a square $[-L,L]^2$, with $L=10$. The interaction kernel is taken to be a DoG with variances $\sigma_1 = 0.1$ and $\sigma_2 = 0.5$, yielding $\mu_0 = 0.217$ and $\mu_c = 1.26$,  and the non-linearity is chosen as $f(t) = \tanh(t)$. The other parameters are chosen as $\alpha = 1$, $\mu = 0.99\mu_0$ and $\varepsilon = 0.025$. We stress that $\mu$ is quite far from the bifurcation point $\mu_c$.
% \begin{figure}
%     \centering
%     \includegraphics[width = .95\linewidth]{SS_Tunnel.png}
%     \caption{Tunnel pattern on the left and associated stationary pattern on the right. Here $\mu = 0.99\mu_0$.}
%     \label{fig:ss-tunnel}
% \end{figure}
\begin{figure}
    \centering
    \includegraphics[width = .95\linewidth]{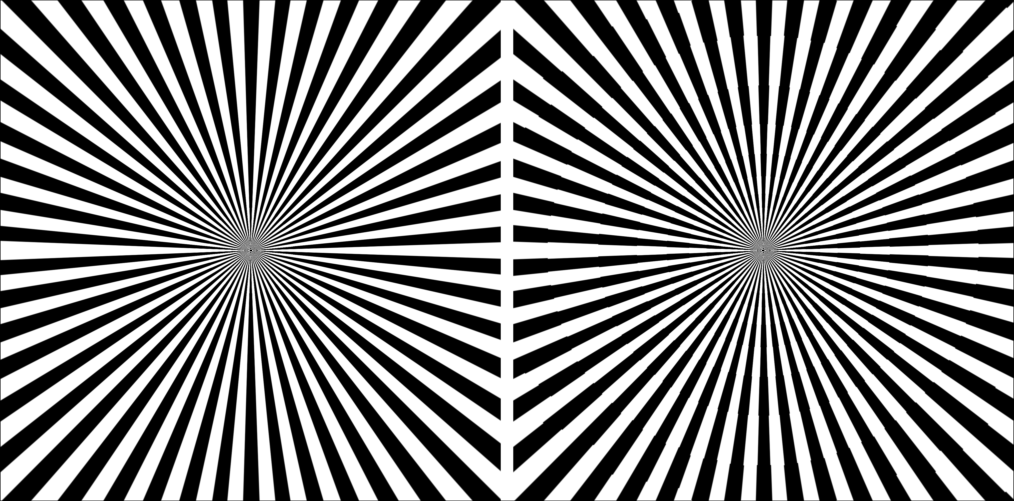}
    \caption{MacKay effect (\emph{right}) on a funnel-like pattern (\emph{left}). Initial input $I = P_F + \varepsilon\chi_\Omega$, $\Omega=[-L,2]\times[-L,L]$. Compare with Figure~\ref{fig:mackay}.}
    \label{fig:mackay-funnel}
\end{figure}

\begin{figure}
    \centering
    \includegraphics[width = .95\linewidth]{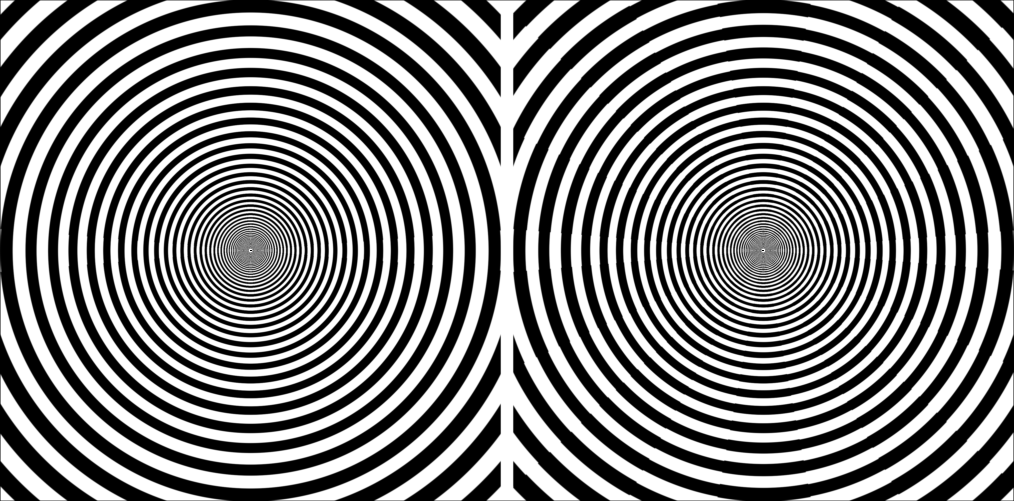}
    \caption{MacKay effect (\emph{right}) on a tunnel-like pattern (\emph{left}). Initial input $I = P_T + \chi_\Omega$, $\Omega=[-L,L]\times[-0.25,0.25]$.}
    \label{fig:mackay-tunnel}
\end{figure}

We collected some representative results in Figures~\ref{fig:mackay-funnel} and \ref{fig:mackay-tunnel}. Here, we visualize the input (left image) and the output (right image) in the retinal representation obtained from the cortical stimuli via the inverse retino-cortical map.
Also, following the convention adopted in \cite{ermentrout1979mathematical, bressloff2001geometric} for geometric visual hallucinations, we present binary versions of these images, where black corresponds to negative values and white to positive ones.

We also present Figure~\ref{fig:Bilock-funnel-fovea} and \ref{fig:Bilock-funnel-periphery}, where we show that our model can reproduce the psychophysiscal results of Billock and Tsou. We remark that these results are obtained with a different choice of nonlinearity $f(t) = (1+\exp(-t+0.25))^{-1}-(1+\exp(0.25))^{-1}$.

\begin{figure}[!ht]
    \centering
    \includegraphics[width = .95\linewidth]{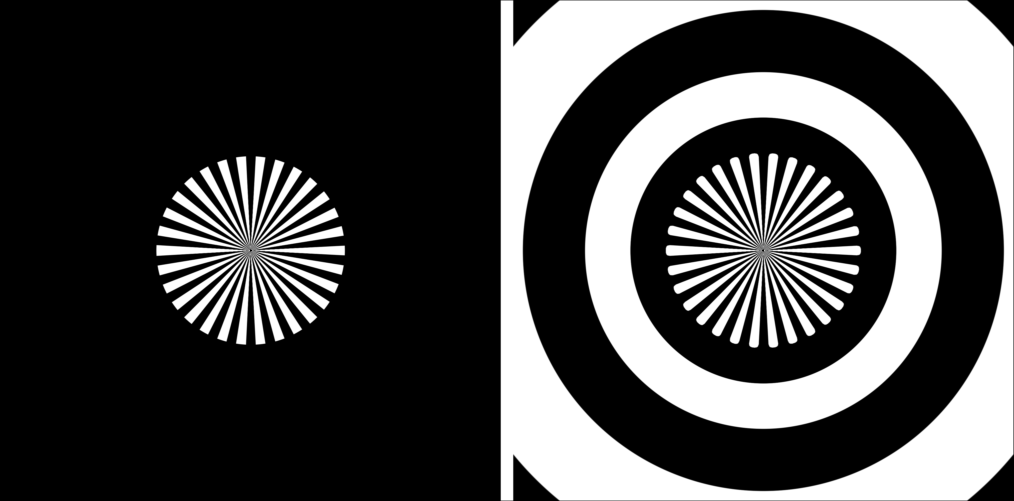}
    \caption{Billock and Tsou visual hallucination, with initial stimulus $I = P_F\chi_\Omega$, $\Omega=[-L,5]\times[-L,L]$. Compare with \cite[Fig 3., a]{billockNeural2007}.}
    \label{fig:Bilock-funnel-fovea}
\end{figure}
\vspace{-0.5cm}
\begin{figure}[!ht]
    \centering
    \includegraphics[width = .95\linewidth]{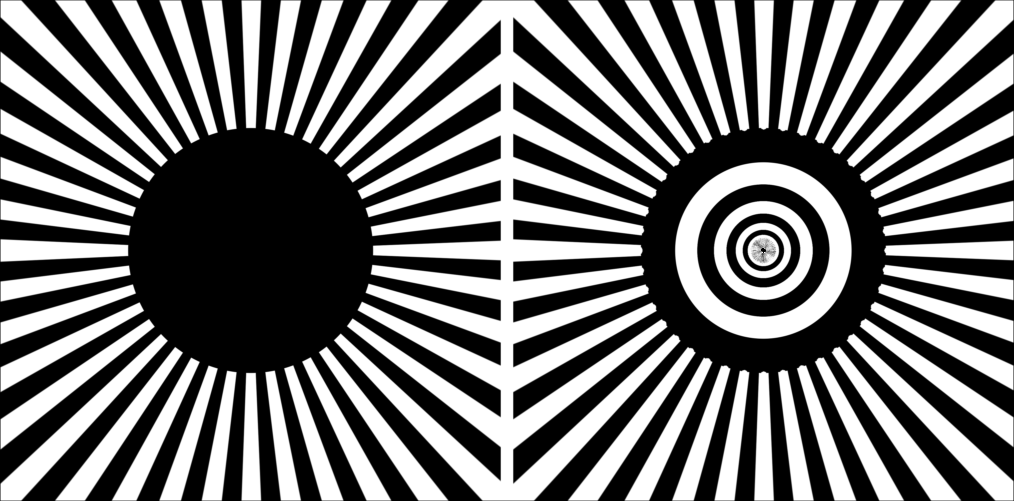}
    \caption{Billock and Tsou visual hallucination, with initial stimulus $I = P_F\chi_\Omega$, $\Omega=[5,L]\times[-L,L]$. Compare with \cite[Fig 3., c]{billockNeural2007}.}
    \label{fig:Bilock-funnel-periphery}
\end{figure}
\section{Conclusions}

In this paper we presented a model and theoretical framework for the description of sensory induced hallucinations. We also presented numerical experiences indicating the capability of this approach to reproduce the MacKay effects and psychophysical observations of \cite{billockNeural2007}. To our knowledge, this is the most parsimonious model to be able of such reproduction.

The numerical simulations provided in this work indicate also that the anisotropic nature of cortical connections in $\v1$ need not to be integrated in the model in order to reproduce the psychophysical results of \cite{billockNeural2007}. This is in contrast with the conjecture advanced by the authors of \cite{billockNeural2007}.

We highlight that both MacKay effects and psychophysical observations of \cite{billockNeural2007} result from input patterns, with global $\mathbf{E}(2)$-symmetry broken. In the former case, the symmetry is broken by localised highly redundant information in the input pattern, whereas, in the latter, the global symmetry is broken by localising the input pattern either in the left or in the right area of the cortex.

\bibliographystyle{IEEEbib}
\bibliography{tamekue2022}

\end{document}